\documentclass{eptcs}
\usepackage{breakurl}             
\usepackage{graphicx}
\usepackage{amsmath}
\usepackage{amssymb}
\usepackage{amsthm}
\usepackage{stmaryrd}
\usepackage{array}
\usepackage{bussproofs}
\usepackage{float}
\floatstyle{boxed}
\restylefloat{figure}

\newcommand{\axc}[1]{\AxiomC{#1}}
\newcommand{\uic}[2]{\RightLabel{\small{#2}}\UnaryInfC{#1}}
\newcommand{\bic}[2]{\RightLabel{\small{#2}}\BinaryInfC{#1}}
\newcommand{\tic}[2]{\RightLabel{\small{#2}}\TrinaryInfC{#1}}

\newcommand{\hyp}{\mathsf{hyp}}
\newcommand{\wkn}[1]{\mathsf{wkn}{(#1)}}
\newcommand{\inl}[1]{\mathsf{inl}{(#1)}}
\newcommand{\inr}[1]{\mathsf{inr}{(#1)}}
\newcommand{\lam}[1]{\mathsf{lam}{(#1)}}
\newcommand{\reset}[1]{\mathsf{reset}{(#1)}}
\newcommand{\shift}[1]{\mathsf{shift}{(#1)}}
\newcommand{\casemy}[3]{\mathsf{case}({#1},{#2},{#3})}
\newcommand{\app}[2]{\mathsf{app}({#1},{#2})}

\DeclareMathOperator{\Nil}{nil}
\DeclareMathOperator{\Cons}{cons}
\DeclareMathOperator{\Unit}{Unit}
\DeclareMathOperator{\one}{tt}

\DeclareMathOperator{\Type}{Type}
\DeclareMathOperator{\Bool}{Bool}
\DeclareMathOperator{\Formula}{Formula}
\DeclareMathOperator{\List}{List}
\DeclareMathOperator{\Syn}{Syntax}
\DeclareMathOperator{\Val}{Value}
\DeclareMathOperator{\Ans}{Answer}
\DeclareMathOperator{\Fst}{fst}
\DeclareMathOperator{\Snd}{snd}
\DeclareMathOperator{\FV}{FV}
\newcommand{\normal}{{\!\!\text{nf}}}
\newcommand{\neutral}{{\!\!\text{ne}}}
\newcommand{\forces}[3]{{#1}\Vdash_{#2}{#3}}
\newcommand{\sforces}[3]{{#1}\Vdash^{\text{s}}_{#2}{#3}}
\newcommand{\run}[1]{\text{run}{(#1)}}
\newcommand{\ret}[1]{\text{return}{(#1)}}
\newcommand{\bind}[2]{\text{bind}{(#1,#2)}}
\newcommand{\nil}{{\Nil}}
\newcommand{\cons}[2]{{\Cons{(#1,#2)}}}
\newcommand{\Universal}{\mathcal{U}}

\newcommand{\lsup}[1]{{{^#1}}\!\!\!}
\newcommand{\reify}[4]{\lsup{#1}\downarrow_{#2}^{\!{#3}}({#4})}
\newcommand{\reflect}[4]{\lsup{#1}\uparrow_{#2}^{\!{#3}}({#4})}
\newcommand{\eval}[2]{\llbracket{#1}\rrbracket_{#2}}
\newcommand{\Gammareflect}[2]{\text{reflect}({#1},{#2})}
\newcommand{\ureset}[1]{\langle{#1}\rangle}
\newcommand{\ushift}[2]{\mathcal{S}{#1}.{#2}}

\theoremstyle{definition}
\newtheorem{definition}{Definition}[section]
\newtheorem*{definition*}{Definition}
\theoremstyle{plain}
\newtheorem{proposition}[definition]{Proposition}
\newtheorem{lemma}[definition]{Lemma}
\newtheorem{theorem}[definition]{Theorem}
\newtheorem{corollary}[definition]{Corollary}
\newtheorem*{proposition*}{Proposition}
\newtheorem*{lemma*}{Lemma}
\newtheorem*{theorem*}{Theorem}
\newtheorem*{corollary*}{Corollary}
\theoremstyle{remark}
\newtheorem{remark}[definition]{Remark}

\newtheorem*{remark*}{Remark}


\title{Type Directed Partial Evaluation for Level-1 Shift and Reset} 
\author{Danko Ilik\thanks{This work is covered by a Kurt G\"odel Research Prize Fellowship 2011}
  \institute{Laboratory for Complex Systems and Networks\\Macedonian Academy of Sciences and Arts\\Skopje, Macedonia}
  \email{danko.ilik@gmail.com}
}

\begin{document}
\maketitle

\begin{abstract} 
  We present an implementation in the Coq proof assistant of type
  directed partial evaluation (TDPE) algorithms for call-by-name and
  call-by-value versions of shift and reset delimited control
  operators, and in presence of strong sum types. We prove that the
  algorithm transforms well-typed programs to ones in normal
  form. These normal forms can not always be arrived at using the so
  far known equational theories. The typing system does not allow
  answer-type modification for function types and allows delimiters to
  be set on at most one atomic type. The semantic domain for
  evaluation is expressed in Constructive Type Theory as a dependently
  typed monadic structure combining Kripke models and continuation
  passing style translations.  
\end{abstract}

\section{Introduction}

Type directed partial evaluation (TDPE) is a technique that partially evaluates a program by first compiling it, and pre-computing known (``static'') input data on the fly, and then decompiling it to normal form in an efficient process driven by the program's type. It was discovered by Danvy \cite{Danvy1999} in Programming Languages Theory, although the exact same algorithm had been isolated at about the same time also in the study of typed lambda calculi and in Logic: Berger and Schwichtenberg \cite{BergerS1991} found it while looking for an efficient procedure for reducing \emph{open} lambda terms and called it Normalization by Evaluation (NBE); Catarina Coquand \cite{CCoquand1993} realized that it is the procedure behind the proof of completeness of minimal intuitionistic logic (without $\bot, \vee$ and $\exists$) with respect to Kripke models.

However, when one moves from simply typed lambda calculus towards richer programming languages, to extend the TDPE method to cope with the new constructs does not appear to be straightforward. Already adding strong sum types seems to require one to implement TDPE using delimited control operators -- indeed, this is one of the more important applications of Danvy and Filinski's operators shift and reset \cite{DanvyF1990}. In turn, when considering TDPE for a language extended with the delimited control operators themselves, there has only been preliminary work on the subject, for the call-by-value case, by Tsushima and Asai \cite{TsushimaAsai2009}.

In this paper, we consider TDPE for the first level of the shift and reset hierarchy. Using their simpler non-extended CPS semantics, we build a type-theoretic framework that acts as a specification for TDPE algorithms (Section~\ref{sec:model}). The algorithms themselves, for both call-by-value and call-by-name, are given in Section~\ref{sec:algorithm}, where we also look at specific examples and compare their partial evaluations to the ones predicted by the known equational theories. In the concluding Section~\ref{sec:conclusion}, we give further explanation about our implementation and about the related works.

The Coq implementation of the algorithms can be found at the address \href{http://dankoi.github.com/metamath/}{dankoi.github.com/metamath}. Originally, this work was conceived as an alternative normalization proof for the core logical system from \cite{Ilik2010}, a proper constructive extension of intuitionistic logic with delimited control operators.

\section{Type-theoretic Model}\label{sec:model}

The programming language that we want to partially evaluate, our \emph{object language}, will be the lambda calculus with function and sum types and the shift/reset delimited control operators, described in Table~\ref{tab:typing}. We do not work with the most general known typing system for shift and reset in which implication is a quaternary connective \cite{DanvyF1989} and we allow a delimiter ($\reset{\cdot}$) to be set only at an atomic type ($\bot$).

\begin{table}
  \centering
  \begin{tabular}{ m{7cm} m{7cm} }
    \begin{prooftree}
      \axc{$~$}
      \uic{$\hyp : A,\Gamma\vdash_b A$}{~}
    \end{prooftree}
    &
    \begin{prooftree}
      \axc{$p : \Gamma\vdash_b A$}
      \uic{$\wkn{p} : B,\Gamma\vdash_b A$}{~}
    \end{prooftree}
    \\
    \begin{prooftree}
      \axc{$p : \Gamma\vdash_b A$}
      \uic{$\inl{p} : \Gamma\vdash_b A\vee B$}{~}
    \end{prooftree}
    &
    \begin{prooftree}
      \axc{$p : \Gamma\vdash_b B$}
      \uic{$\inr{p} : \Gamma\vdash_b A\vee B$}{~}
    \end{prooftree}
    \\
    \multicolumn{2}{ m{14cm} }{
      \begin{prooftree}
        \axc{$p : \Gamma\vdash_b A\vee B$}
        \axc{$q : A,\Gamma\vdash_b C$}
        \axc{$r : B,\Gamma\vdash_b C$}
        \tic{$\casemy{p}{q}{r} : \Gamma\vdash_b C$}{~}
      \end{prooftree}
    }
    \\
    \begin{prooftree}
      \axc{$p : A,\Gamma\vdash_b B$}
      \uic{$\lam{p} : \Gamma\vdash_b A\to B$}{~}
    \end{prooftree}
    &
    \begin{prooftree}
      \axc{$p : \Gamma\vdash_b A\to B$}
      \axc{$q : \Gamma\vdash_b A$}
      \bic{$\app{p}{q} : \Gamma\vdash_b B$}{~}
    \end{prooftree}
    \\
    \begin{prooftree}
      \axc{$p : \Gamma\vdash_1 \bot$}
      \uic{$\reset{p} : \Gamma\vdash_b \bot$}{~}
    \end{prooftree}
    &
    \begin{prooftree}
      \axc{$p : A\to \bot,\Gamma\vdash_1 \bot$}
      \uic{$\shift{p} : \Gamma\vdash_1 A$}{~}
    \end{prooftree}
  \end{tabular}
  
  \caption{A typing system for lambda calculus with sum types and shift and reset, where variable binding is handled using deBruijn indices ($\hyp$ and $\wkn{\cdot}$)}
  \label{tab:typing}
\end{table}

For expressing variable binding, we rely on deBruijn indices in the form of $\hyp$ and $\wkn{\cdot}$ rules, where $\hyp$ can be thought of as zero and $\wkn{\cdot}$ as the successor. Lambda abstraction ($\lam{\cdot}$) and control ($\shift{\cdot}$) are therefore unary.

The turnstile ``$\vdash$'' is annotated by a Boolean $b$ of value 0 or 1, value 1 meaning that a delimiting $\reset{}$ has been previously applied in the lambda term (typing tree derivation). All rules, except for $\shift{\cdot}$ and $\reset{\cdot}$ ignore this annotation $b$. The rule $\reset{}$ sets it to 1, and $\shift{}$ can only be used if the annotation has been previously set to 1 i.e. in a delimited sub-term.

The idea behind every TDPE algorithm is the following: we want to transform a program written in the object language to a meta-level ``bytecode'' version of it, ``run'' this bytecode  (this is called the \emph{evaluation} phase), and then, based on the program's type, recover a program in the object language that is already in normal form (the \emph{reification} phase) and $\beta(\eta)$-equal to the starting one. In other words, one relies on normalization at the meta-level, to produce an object-level normal form. This becomes non-trivial if the meta-level and the object level are not essentially the same, like in our case where the meta-level has no control feature while the object-level does.

The essential choice to make is what to choose for the ``semantic'' meta-level structure that evaluation will take place in. CPS semantics imposes itself, because it is the orthodox and simplest way to specify shift and reset  \cite{DanvyF1992}. The TDPE will thus be of the form of the two-phase transformation,
\[
\Syn \rightsquigarrow \left(\left(\Val\to\Ans\right)\to\Ans\right) \rightsquigarrow \Syn,
\]
where $\Syn$ denotes the type of programs of the object language, and $\Val$ and $\Ans$ are ``values'' and ``answers'' of a ``continuation'' in the usual terminology \cite{Danvy1999}. If we also want the transformation to account for \emph{open} terms (which allows to do normalization below a binder) and to guarantee that the input and output programs are actually programs of the same type, we need to enrich the  semantic domain (bytecode) by a pre-order, keeping track of a context $\Gamma$ denoting open variables, and a parameter $A$ corresponding to the type of the transformed program. 
We obtain the statement
\[
\Gamma\vdash A \rightsquigarrow \Gamma \Vdash A \rightsquigarrow \Gamma\vdash^\normal A,
\]
where $\Gamma\Vdash A$ denotes the semantic domain that is the target of evaluation (Subsection~\ref{subsec:soundness}), and source of reification (Subsection~\ref{subsec:completeness}). 

\subsection{Evaluating into the Models}\label{subsec:soundness}

We will use a combination of Kripke-CPS models for classical logic (used previously with Lee and Herbelin for proving NBE for the classical sequent calculus LK$_{\mu\tilde\mu}$ \cite{IlikLH2010}) and those for intuitionistic logic (used for proving NBE for intuitionistic natural deduction with $\vee$ and $\exists$ \cite{Ilik2011}). We give the mathematical definitions, trying to be precise but as informal as possible -- the interested reader may find the fully formal version in the Coq implementation -- keeping also in mind that, while we do use dependent types, the dependencies are rather weak ($\Pi$-types over the small set of formulas, booleans, and the type $K$).

\begin{definition}A \emph{Kripke CPS structure} is given by a type $K$, a relation $\le : (K \to K \to \Type)$ that is a preorder, i.e. both of
\begin{align*}
w\le w & &\text{(reflexivity)}\\
w_1\le w_2 \to w_2\le w_3 \to w_1\le w_3 & &\text{(transitivity)}
\end{align*}
hold, and a relation
\[
X : K \to \Bool \to \Formula \to \Type
\]
with the properties:
\begin{align*}
w_1\le w_2 &\to X w_1 b A \to X w_2 b A &(\le\text{-monotonicity})\\
b_1\sqsubseteq b_2 &\to X w b_1 A \to X w b_2 A &(\sqsubseteq\text{-monotonicity})\\
X w 1 \bot &\to X w 0 \bot &(\text{meta-}\reset{}).
\end{align*}

$\Bool$ is the type of booleans with inhabitants 0 (false) and 1 (true), and $\sqsubseteq$ is the order on booleans defined by the relation of less-than-or-equal of their numerical values. ``$\Type$'' denotes the type universe of the meta-language, while $\Formula$ is the type of types of the object language i.e. those built from $\to$, $\vee$, atomic types, and the special fixed type $\bot$ that reset can be set on -- we do not make the usual assumption that $\bot$ denotes the empty type, it is simply a notation for a chosen atomic type.

Inhabitants $w$ of type $K$ are called \emph{worlds}, and when we have $X w b A$ we say that the world $w$ is \emph{exploding} for the formula $A$ with annotation $b$. This terminology (``exploding'' or ``fallible'') comes from classic use of Kripke models when interpreting absurdity in a constructive way \cite{TroelstraVD1}. The relation $X$, the answer type of the continuations, will later be instantiated with the set of typable terms in normal form, that is, it will be used to pass on the output of the TDPE between different sub-phases of the algorithm in the process of building the final normal form.
\end{definition}

\begin{definition}\label{def:forcing}Given $F : (K\to\Bool\to\Formula\to\Type)$, $A:\Formula$, $b:\Bool$, $w:K$, the dependently typed continuations ``monad'', $\forces{w}{b}{A}$, defined by
\begin{multline*}
\forces{w}{0}{A} := (C:\Formula)(w_1:K)(w\le w_1 \to \\(w_2:K)(w_1\le w_2 \to F w_2 0 A \to X w_2 0 C) \to X w_1 0 C)
\end{multline*}
\begin{multline*}
\forces{w}{1}{A} := (w_1:K)(w\le w_1 \to \\(w_2:K)(w_1\le w_2 \to F w_2 1 A \to X w_2 1 \bot) \to X w_1 1 \bot)
\end{multline*}
is called \emph{forcing}. That is, we read $\forces{w}{b}{A}$ as ``the world $w$ forces the type $A$ with annotation $b$''. 
\end{definition}

\begin{remark}
  We have put the word ``monad'' in quotes because we have not sought to prove the usual categorical or the functional programming laws for monads hold. Yet, the fact that we can define the monadic unit, bind, and run, will be quite convenient for structuring the computation/proofs later on.
\end{remark}

The following two definitions present two alternatives that can be used to instantiate $F$ from Definition~\ref{def:forcing}; when the (non-strong) forcing relation is used in the definitions, it is implicitly instantiated with the strong forcing relation being defined. Note that, type theoretically, (non-strong) forcing and strong forcing need not be defined simultaneously, Definition~\ref{def:forcing} comes first.

\begin{definition}[Strong forcing, call-by-value variant] The \emph{strong forcing} relation $\sforces{w}{b}{A}$ is defined by recursion on the type $A$, by the following clauses:
\begin{align*}
\sforces{w}{b}{A} &:= X w b A \qquad (A -\text{atomic type})\\
\sforces{w}{b}{A\vee B} &:= \sforces{w}{b}{A} + \sforces{w}{b}{B}\\
\sforces{w}{b}{A\to B} &:= (w':K)(w\le w' \to \sforces{w'}{b}{A} \to \forces{w'}{b}{B})
\end{align*}
\end{definition}

\begin{definition}[Strong forcing, call-by-name variant] The \emph{strong forcing} relation $\sforces{w}{b}{A}$ is defined by recursion on the type $A$, by the following clauses:
\begin{align*}
\sforces{w}{b}{A} &:= X w b A \qquad (A -\text{atomic type})\\
\sforces{w}{b}{A\vee B} &:= \forces{w}{b}{A} + \forces{w}{b}{B}\\
\sforces{w}{b}{A\to B} &:= (w':K)(w\le w' \to \forces{w'}{b}{A} \to \forces{w'}{b}{B})
\end{align*}
\end{definition}

Although a different strong forcing relation $(\sforces{\cdot}{\cdot}{\cdot})$ determines a different forcing relation $(\forces{\cdot}{\cdot}{\cdot})$, the important properties that hold of the latter are nonetheless the same regardless of which strong forcing was chosen.

\begin{lemma} The following properties hold of strong and ordinary forcing:
  \begin{align*}
    w\le w' &\to \sforces{w}{b}{A} \to \sforces{w'}{b}{A}\\
    w\le w' &\to \forces{w}{b}{A} \to \forces{w'}{b}{A}\\
    b\sqsubseteq b' &\to \sforces{w}{b}{A} \to \sforces{w}{b'}{A}\\
    b\sqsubseteq b' &\to \forces{w}{b}{A} \to \forces{w}{b'}{A}\\
    \forces{w}{b}{\bot} &\to X w b \bot & (\run{\cdot})\\
    \sforces{w}{b}{A} &\to \forces{w}{b}{A} & (\ret{\cdot})\\
    (w':K)(w\le w' &\to \sforces{w'}{b}{A}\to\forces{w'}{b}{B}) \\
    & \to \forces{w}{b}{A}\to \forces{w}{b}{B} & (\bind{\cdot}{\cdot})
  \end{align*}
\end{lemma}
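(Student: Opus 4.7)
The plan is to handle each clause individually, most of them by induction on the formula $A$. The two $\le$-monotonicity statements and the $\sqsubseteq$-monotonicity of ordinary forcing actually admit direct proofs from Definition~\ref{def:forcing}. For the $\le$-monotonicity of $\forces{w}{b}{A}$, we simply precompose the accessibility parameter of the given inhabitant with transitivity of $\le$. The $\le$-monotonicity and $\sqsubseteq$-monotonicity of strong forcing are proved by induction on $A$: the atomic case invokes the monotonicity axioms of $X$; the disjunction case invokes the induction hypothesis (call-by-value) or the monotonicity of ordinary forcing (call-by-name); and the arrow case combines the induction hypothesis on the codomain with transitivity of $\le$ for $\le$-monotonicity, and with the corresponding monotonicity applied to the domain argument for $\sqsubseteq$-monotonicity.

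The step I expect to be the main obstacle is the $\sqsubseteq$-monotonicity of ordinary forcing in the direction $0\to 1$, because $\forces{w}{0}{A}$ is polymorphic in an answer type $C$ while $\forces{w}{1}{A}$ fixes the answer type to $\bot$. Given $p:\forces{w}{0}{A}$, we build a term of type $\forces{w}{1}{A}$ by instantiating $p$ with $C:=\bot$; the continuation $k:F\,w_2\,1\,A\to X\,w_2\,1\,\bot$ received in the $\forces{w}{1}{A}$-statement must be converted to one of type $F\,w_2\,0\,A\to X\,w_2\,0\,\bot$, which is achieved by pre-transporting the input along $\sqsubseteq$-monotonicity of $F$ (the strong forcing clause, handled at lower complexity) and post-composing with the meta-$\reset{}$ axiom of $X$. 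The resulting $X\,w_1\,0\,\bot$ is then lifted back up to $X\,w_1\,1\,\bot$ using $\sqsubseteq$-monotonicity of $X$. This is the one spot where the delicate interplay between $\sqsubseteq$-monotonicity of $X$ and the meta-$\reset{}$ axiom is essential.

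The final three items are the standard monadic operations of a Kripke-CPS model. We obtain $\run{p}$ for $p:\forces{w}{b}{\bot}$ by specializing $p$ at $w_1=w_2=w$, with reflexivity for the accessibility proofs and the identity as continuation, which typechecks because $\sforces{w_2}{b}{\bot}=X\,w_2\,b\,\bot$ at the atomic formula $\bot$ (instantiating $C:=\bot$ first in the $b=0$ case). We obtain $\ret{v}$ for $v:\sforces{w}{b}{A}$ by handing $v$, transported to $w_2$ via $\le$-monotonicity of strong forcing, to the incoming continuation at $w_2$. Finally, $\bind{\cdot}{\cdot}$ is the Kleisli extension: given a strong arrow $k:(w':K)(w\le w'\to\sforces{w'}{b}{A}\to\forces{w'}{b}{B})$ and an inhabitant $p:\forces{w}{b}{A}$, we thread $p$ with a composed continuation that first invokes $k$ to land in $\forces{\cdot}{b}{B}$ and then feeds the answer to the outer continuation of the target, using transitivity of $\le$ to compose the accessibility witnesses.
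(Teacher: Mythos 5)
Your proposal is correct and follows essentially the same route as the paper: monotonicity of strong forcing by induction on the formula $A$ using the monotonicity axioms of $X$, monotonicity of ordinary forcing by direct manipulation of the continuation type (no induction), and $\run{\cdot}$, $\ret{\cdot}$, $\bind{\cdot}{\cdot}$ implemented exactly as the glue code of Figure~\ref{fig:glue} --- the identity continuation, feeding the ($\le$-transported) value to the incoming continuation, and Kleisli composition via transitivity of $\le$, respectively. Your identification of the meta-$\reset{}$ axiom, combined with $\sqsubseteq$-monotonicity of $X$, as the crux of the $0\sqsubseteq 1$ case for ordinary forcing is precisely what the formalization does (the paper itself notes in Section~\ref{examples} that the corresponding lemma \texttt{Kont\_sforces\_mon2'} ``uses a reset in the proof'').
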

\begin{proof} The proofs of monotonicity of strong forcing with respect to $\le$ and $\sqsubseteq$ are done by induction on the formula $A$ using monotonicity of $X$. Monotonicity of (non-strong) forcing requires no induction. The proofs of $\run{\cdot}, \ret{\cdot}$, and $\bind{\cdot}{\cdot}$ follow the structure given on Figure~\ref{fig:glue}.
\end{proof}

We will use the same turnstile symbols to denote forcing and strong forcing of \emph{finite lists} of formulas, $\Gamma$, defined by,
\begin{align*}
  \forces{w}{b}{\nil} &:= \Unit\\
  \forces{w}{b}{\cons{A}{\Gamma}} &:= \forces{w}{b}{A}\times \forces{w}{b}{\Gamma} \\
  \sforces{w}{b}{\nil} &:= \Unit\\
  \sforces{w}{b}{\cons{A}{\Gamma}} &:= \sforces{w}{b}{A}\times \sforces{w}{b}{\Gamma},
\end{align*}
where $\times$ is the product type (i.e. logical conjunction, when used as a predicate) and $\Unit$ is the singleton type. Naturally, the monotonicity properties from the previous lemma extend to forcing and strong forcing for lists.

\begin{theorem}[Evaluation for call-by-name]\label{thm:eval:cbn} If $p: \Gamma\vdash_b A$, then for any $w$ and any $b'$ such that $b\sqsubseteq b'$ we have that from the finite product $\forces{w}{b'}{\Gamma}$ we can construct $\forces{w}{b'}{A}$.
\end{theorem}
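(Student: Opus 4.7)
The plan is to proceed by structural induction on the derivation $p : \Gamma \vdash_b A$, in each case producing the required element of $\forces{w}{b'}{A}$ from the given $\forces{w}{b'}{\Gamma}$ by assembling the monadic primitives $\ret{\cdot}$, $\bind{\cdot}{\cdot}$, and $\run{\cdot}$ together with the two monotonicity properties (with respect to $\le$ and to $\sqsubseteq$). Throughout, I would silently appeal to $\sqsubseteq$-monotonicity to promote contexts from annotation $b'$ to whatever annotation the premises require, and to $\le$-monotonicity to transport the forcing of $\Gamma$ to a future world whenever a binder or a continuation introduces one.

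For the deBruijn rules $\hyp$ and $\wkn{\cdot}$, the required forcing is simply the appropriate projection out of the list forcing $\forces{w}{b'}{A,\Gamma}$, composed in the $\wkn{}$ case with the inductive hypothesis. For $\inl{\cdot}$ and $\inr{\cdot}$, I would apply the inductive hypothesis, inject into the left/right summand to form a strong forcing of $A\vee B$ (using the call-by-name clause, whose summands are already non-strong forcings), and then apply $\ret{\cdot}$. For $\lam{\cdot}$, I would use $\le$-monotonicity of the context forcing to extend $\forces{w}{b'}{\Gamma}$ to any $w' \ge w$, form the cons of the argument with this extended context, invoke the inductive hypothesis on the body, wrap the result as an element of $\sforces{w}{b'}{A\to B}$, and lift via $\ret{\cdot}$. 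The elimination rules $\app{p}{q}$ and $\casemy{p}{q}{r}$ are handled by $\bind{\cdot}{\cdot}$: bind on the forcing of $A\to B$ (respectively $A\vee B$) obtained from the inductive hypothesis, which exposes a strong forcing at some future world $w'$, then apply the function to the inductive hypothesis of $q$ (for application), or case-split on the sum and invoke the inductive hypothesis of $q$ or $r$ with the appropriate extended context (for case).

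For $\reset{p}$, lift the given $\forces{w}{b'}{\Gamma}$ via $\sqsubseteq$-monotonicity to $\forces{w}{1}{\Gamma}$ (since $b' \sqsubseteq 1$), apply the inductive hypothesis to obtain $\forces{w}{1}{\bot}$, then $\run{\cdot}$ gives $X\,w\,1\,\bot$; the meta-$\reset{}$ axiom of the Kripke CPS structure yields $X\,w\,0\,\bot$, whence $\sqsubseteq$-monotonicity of $X$ produces $X\,w\,b'\,\bot$, which is $\sforces{w}{b'}{\bot}$ (since $\bot$ is atomic), and $\ret{\cdot}$ finishes.

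The main obstacle is the $\shift{p}$ case, which is the only one that essentially uses the continuation-passing shape of the forcing definition. Here $b = 1$ forces $b' = 1$. To build $\forces{w}{1}{A}$, I would unfold its definition: given $w_1 \ge w$ and a continuation $k$ accepting $\sforces{w_2}{1}{A}$ at any $w_2 \ge w_1$ and returning $X\,w_2\,1\,\bot$, I need to produce $X\,w_1\,1\,\bot$. The idea is to reify $k$ as a meta-level function of type $A \to \bot$: define $f : \sforces{w_1}{1}{A\to \bot}$ by sending $w' \ge w_1$ and $a : \forces{w'}{1}{A}$ to the element of $\forces{w'}{1}{\bot}$ obtained by feeding the suitably $\le$-monotonized $k$ into $a$ (reflexivity provides the identity step), and then apply $\ret{\cdot}$ to obtain $\forces{w_1}{1}{A\to\bot}$. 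Cons this with the $\le$-monotonized context forcing $\forces{w_1}{1}{\Gamma}$, apply the inductive hypothesis to $p$ to get $\forces{w_1}{1}{\bot}$, and finish with $\run{\cdot}$. Bookkeeping of worlds and of the compositions of $\le$ needed to monotonize $k$ down to $w'$ is where the proof is most delicate, but no new idea beyond the structure already provided by the monadic primitives is required.
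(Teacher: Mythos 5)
Your proposal is correct and follows essentially the same route as the paper: the paper's proof is exactly an induction on the derivation of $p$ carried out in continuation-passing style, and the case-by-case constructions you describe (projections for $\hyp$/$\wkn{\cdot}$, $\ret{\cdot}$ for the introductions, $\bind{\cdot}{\cdot}$ for the eliminations, meta-$\reset{}$ plus $\run{\cdot}$ for $\reset{\cdot}$, and packaging the continuation $\kappa$ as a strong forcing of $A\to\bot$ via $\ret{\alpha\mapsto\ret{\alpha\cdot\kappa}}$ for $\shift{\cdot}$) coincide with the program skeleton the paper gives in Figure~\ref{fig:evalcbn}.
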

\begin{theorem}[Evaluation for call-by-value]\label{thm:eval:cbv} If $p: \Gamma\vdash_b A$, then for any $w$ and any $b'$ such that $b\sqsubseteq b'$ we have that from the finite product $\sforces{w}{b'}{\Gamma}$ we can construct $\forces{w}{b'}{A}$.
\end{theorem}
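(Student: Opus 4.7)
The plan is to prove both evaluation theorems together by induction on the typing derivation $p : \Gamma \vdash_b A$, the only difference being which clause of strong forcing is used; I will describe the call-by-value case since it is the final statement. At each step the goal has shape $\forces{w}{b'}{A}$ with access to $\sforces{w}{b'}{\Gamma}$, and the main tools are the monadic glue from the preceding lemma: $\ret{\cdot}$ to inject strong forcing into forcing, $\bind{\cdot}{\cdot}$ to sequence, $\run{\cdot}$ to extract at type $\bot$, and the two monotonicity lemmas to move between worlds and annotation bits.

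The non-control rules are routine. For $\hyp$ project the head of the context to get $\sforces{w}{b'}{A}$ and apply $\ret{\cdot}$; for $\wkn{p}$ drop the head and invoke the induction hypothesis. For $\inl{p}$ and $\inr{p}$, use $\bind{\cdot}{\cdot}$ on the IH to expose a strong-forcing value of the summand, coproduct-inject, and return. The $\lam{p}$ case constructs a strong-forcing function value: at a future world $w' \ge w$ given $\sforces{w'}{b'}{A}$, lift the environment by $\le$-monotonicity, cons the new argument, apply the IH, and wrap with $\ret{\cdot}$. For $\app{p}{q}$ bind the IH of the function, bind the IH of the argument, and apply the function at the current world using reflexivity of $\le$. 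For $\casemy{p}{q}{r}$, $\bind{\cdot}{\cdot}$ the IH of the scrutinee to get $\sforces{w'}{b'}{A} + \sforces{w'}{b'}{B}$, then in each branch extend the (monotonicity-lifted) environment and apply the corresponding IH.

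The control rules are where the annotation bit does real work. For $\reset{p}$ the premise has $b=1$, so the IH delivers $\forces{w}{1}{\bot}$; $\run{\cdot}$ gives $X w 1 \bot$, the meta-$\reset{}$ axiom of the Kripke CPS structure turns this into $X w 0 \bot$, which is exactly $\sforces{w}{0}{\bot}$, and then $\ret{\cdot}$ followed by $\sqsubseteq$-monotonicity yields $\forces{w}{b'}{\bot}$ for either value of $b'$. For $\shift{p}$, unfold the target $\forces{w}{1}{A}$ to be handed a continuation $k$ at a future world $w_1 \ge w$ that consumes $\sforces{w_2}{1}{A}$ at any $w_2 \ge w_1$ and returns in $X w_2 1 \bot$. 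From $k$, build the captured function as an inhabitant of $\sforces{w_1}{1}{A \to \bot}$: at $w_2 \ge w_1$ and input $\sforces{w_2}{1}{A}$, call $k$ and postcompose with $\ret{\cdot}$ to obtain $\forces{w_2}{1}{\bot}$, so the function really does map into forcing of $\bot$. Extend the $\le$-lifted environment with this captured function, apply the IH of $p$ to reach $\forces{w_1}{1}{\bot}$, and close with $\run{\cdot}$ to land in $X w_1 1 \bot$, as required.

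The hardest step is unquestionably the $\shift{\cdot}$ case: one must hand-assemble the ``captured continuation'' as a strong-forcing value at $A \to \bot$, and this only works because the call-by-value definition makes the domain of a function type be strong forcing rather than forcing, which matches precisely what $k$ expects. Getting this right is mostly a matter of being disciplined about which Kripke world each sub-expression lives in and threading $\le$-monotonicity through the environment and captured continuation at each transition $w \le w_1 \le w_2$. For the call-by-name theorem the same induction works verbatim except that in the sum and arrow cases one manipulates ordinary forcing directly instead of extracting strong forcing via $\bind{\cdot}{\cdot}$, which in fact slightly simplifies the $\casemy{\cdot}{\cdot}{\cdot}$ and $\app{\cdot}{\cdot}$ cases.
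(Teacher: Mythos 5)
Your proposal is correct and follows essentially the same route as the paper: induction on the typing derivation in continuation-passing style, assembling each case from $\ret{\cdot}$, $\bind{\cdot}{\cdot}$, $\run{\cdot}$, the monotonicity lemmas, and (for $\reset{\cdot}$) the meta-$\reset{}$ axiom, with your $\shift{\cdot}$ case matching the paper's clause $\kappa\mapsto\run{\eval{p}{\alpha\mapsto\ret{\kappa\cdot\alpha},\rho}}$ exactly. The only cosmetic divergence is in $\reset{\cdot}$, where you always descend to annotation $0$ via meta-$\reset{}$ and lift back by $\sqsubseteq$-monotonicity, whereas the paper's skeleton splits on the annotation and applies meta-$\reset{}$ only when the target bit is $0$; both are valid.
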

\begin{proof}The proofs of both theorems are done in continuation-passing style, by using induction on the derivation of $p$. The program skeletons that corresponds to the proofs can be seen on figures~\ref{fig:evalcbn} and~\ref{fig:evalcbv}, 
 and the full proofs are available in the Coq formalization.\end{proof}

\subsection{Reifying from the Models}\label{subsec:completeness}

While the evaluation theorems from the previous subsection can be used for any concrete structure that implements the Kripke-CPS models axiomatization, in this section we build one such model, $\Universal$, the \emph{universal model}, from syntactic elements. It gets its name from the fact that if something is forced in $\Universal$ then it is also forced in any other possible model.

To obtain a finer grained characterization of the TDPE procedure, we will separate the lambda terms into a level of \emph{normal terms} and a level of \emph{neutral terms} using the following inductive definition.
\begin{align*}
(-\vdash^\normal_b-) \ni r &::= \lam{r} ~|~ \inl{r} ~|~ \inr{r} ~|~ \shift{r} ~|~ e \\
(-\vdash^\neutral_b-) \ni e &::= \app{e}{r} ~|~ \casemy{e}{r_1}{r_2} ~|~ \reset{e} ~|~ \hyp ~|~ \wkn{r}
\end{align*}
This definition concerns \emph{typed} lambda terms (i.e. typing tree derivations), although typing information has been suppressed.

The separation into normal versus neutral terms is standard in the NBE literature, but what is new here is that, in order to obtain the Disjunction Property at the end of this section, 
$\reset{\cdot}$ has to be neutral.

\begin{definition}[The model $\Universal$] The universal Kripke-CPS model $\Universal$ is built when the set of worlds is the set of contexts $\Gamma$,
\[
K := \List(\Formula),
\]
and the predicate $X$ is defined by recursion on the structure of types of the object language,
\begin{align*}
  X \Gamma b A &:= \Gamma \vdash^\neutral_b A \qquad (A -\text{atomic type})\\
  X \Gamma b \bot &:= \Gamma \vdash^\neutral_b \bot\\
  X \Gamma b (A \vee B) &:= \Gamma \vdash^\normal_b A\vee B \\
  X \Gamma b (A \to B) &:= \Gamma \vdash^\normal_b A\to B,
\end{align*}
as the set of terms in normal or neutral form of the given type.

The pre-order $\le$ is defined as the prefix relation on lists. It is not hard to see that reflexivity and transitivity of $\le$ hold, and that $\le$-monotonicity and $\sqsubseteq$-monotonicity hold by the weakening properties of the typing system (formal lemmas \texttt{proof\_nf\_mon}, \texttt{proof\_ne\_mon}, \texttt{proof\_nf\_mon2}, and \texttt{proof\_ne\_mon2}). The property $\text{meta-}\reset{}$ is provided by the syntactic $\reset{}$ rule (formal lemma \texttt{X\_reset}).
\end{definition}

We can now prove that for any meta-level evaluation there exists a term in the object language (\emph{reification} part). Due to contravariance of implication (function types), we need a simultaneous map in the other direction (\emph{reflection} part) \footnote{Note that, while reflection and evaluation (theorems~\ref{thm:eval:cbn} and~\ref{thm:eval:cbv}), have the same typing, the first just does eta-expansions by recursion on the object-language type, while the latter is more informative being defined by recursion on the object-language \emph{term}. }.

\begin{theorem}[Reification ($\downarrow$) and reflection ($\uparrow$)] Given $A:\Formula$, $\Gamma:\List(\Formula)$ and $b:\Bool$, the following two statements hold:
  \begin{align}
    \forces{\Gamma}{b}{A} &\to \Gamma\vdash^\normal_b A & \text{``reify''} \tag{$\reify{\Gamma}{b}{A}{\cdot}$}\\
    \Gamma\vdash^\neutral_b A &\to \forces{\Gamma}{b}{A} & \text{``reflect''} \tag{$\reflect{\Gamma}{b}{A}{\cdot}$}
  \end{align}
\end{theorem}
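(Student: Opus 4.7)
My plan is to proceed by simultaneous induction on the formula $A$, establishing both $\reify{\Gamma}{b}{A}{\cdot}$ and $\reflect{\Gamma}{b}{A}{\cdot}$ uniformly in $\Gamma$ and in both values of the annotation $b$. The two constructions are mutually dependent across $\to$ (by contravariance) and across $\vee$ (because reflection needs to introduce a $\casemy{\cdot}{\cdot}{\cdot}$ whose branches reflect freshly bound hypotheses). In the base case, when $A$ is atomic or the distinguished $\bot$, I would exploit the fact that $\sforces{\Gamma}{b}{A}$ coincides with $X\Gamma b A$ and hence already is a neutral term. Reflection is then immediate: invoke the given continuation-argument at the current world and feed it the weakened neutral term viewed as a strong forcing. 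Reification splits on $b$: when $b=0$, I would instantiate the universally quantified answer type to $A$ itself, feed in the identity continuation, and obtain an element of $X\Gamma 0 A$ which either is or embeds into a normal term. When $b=1$ and $A\ne\bot$, the forcing can only discharge at answer type $\bot$, so I would work in the extended context $A\to\bot,\Gamma$, invoke the forcing there with a continuation that reifies the delivered strong forcing of $A$ and applies the fresh $A\to\bot$ hypothesis via $\app{\cdot}{\cdot}$ to obtain a neutral $\bot$, finally closing the derivation with $\shift{\cdot}$.

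For $A = B\vee C$, reification would follow the same $b=0$/$b=1$ schema (working directly at answer type $B\vee C$, or at $\bot$ in the extended context $(B\vee C)\to\bot,\Gamma$ followed by $\shift{\cdot}$), with the continuation recursively reifying the summand delivered by the strong forcing (after $\ret{\cdot}$ in the CBV clause) and wrapping it in $\inl{\cdot}$ or $\inr{\cdot}$. Reflection from a neutral $e$ would build a $\casemy{e}{\cdot}{\cdot}$ whose two branches extend the context, reflect the freshly bound hypothesis (IH), inject into the strong sum, pass the result to the outer continuation, and coerce the resulting $X$-values to normal form. For $A = B\to C$, reification would build a $\lam{\cdot}$: reflect the fresh $B$-hypothesis to a forcing of $B$, combine it with the Kleisli function stored in the strong forcing of $B\to C$ via $\bind{\cdot}{\cdot}$ (or via direct application in the CBN clause) to obtain a forcing of $C$, and reify recursively; reflection would pass the continuation a strong-forcing function that, at any extension $w'$, reifies its argument after $\ret{\cdot}$ (skipped in the CBN clause), applies the weakened $e$ via $\app{\cdot}{\cdot}$, and reflects the resulting neutral of type $C$.

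The hard part is the interaction between the annotation $b$ and the $\shift{\cdot}/\reset{\cdot}$ fragment, specifically in the $b=1$ clauses at non-$\bot$ types: here one has to conjure a fresh control hypothesis, arrange the weakenings so that the forcing discharges at answer type $\bot$ in the extended world, and then close with $\shift{\cdot}$, where the local choice of continuation depends on the shape of $A$. One should also verify that the two variants (CBV, CBN) admit essentially the same construction, with insertions of $\ret{\cdot}$ and $\bind{\cdot}{\cdot}$ at the appropriate spots being the only substantive distinction.
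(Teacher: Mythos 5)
Your proposal is correct and follows essentially the same route as the paper: a simultaneous induction on the type $A$, with reification splitting on the annotation $b$ (identity continuation at answer type $A$ when $b=0$; a fresh $A\to\bot$ hypothesis discharged by $\shift{\cdot}$ when $b=1$), mutual recursion with reflection at $\to$ and $\vee$, and the CBV/CBN variants differing only in where $\ret{\cdot}$ is inserted --- all of which matches the program skeletons in the paper's Figures~\ref{fig:reifycbn} and~\ref{fig:reifycbv}.
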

\begin{proof} The two statements are proved simultaneously, by induction on the type $A$. The program skeleton corresponding to the proof can be seen on figures~\ref{fig:reifycbn} and~\ref{fig:reifycbv}. The full proof is done in continuation passing style and is available in the Coq formalization.
\end{proof}

Let $\Gammareflect{\Gamma}{b}$ denote the fold-left of the list $\Gamma$ for the reflection function applied to a variable ($\hyp$), using the unit type constructor $\one$ in the base case. For example, for $\Gamma := \cons{A}{\cons{B}{\cons{C}{\nil}}}$, we have
\begin{align*}
\Gammareflect{\Gamma}{b} &: \forces{\Gamma}{b}{A} \times \forces{\Gamma}{b}{B} \times \forces{\Gamma}{b}{C} \times \Unit\\
\Gammareflect{\Gamma}{b} &= \reflect{\Gamma}{b}{A}{\hyp} , \reflect{\Gamma}{b}{B}{\hyp} , \reflect{\Gamma}{b}{C}{\hyp} , \one
\end{align*}

We can now obtain the main result of the paper by composing the Evaluation theorems with the Reification theorem, all of which have constructive proofs. In other words, we take a term $p$, apply a meta-CPS translation $\llbracket\cdot\rrbracket$ on it, in an initial environment built from the context $\Gamma$ by the reflect function, and then reconstruct a term in normal form based on the type $A$ using the reification function $(\downarrow\cdot)$.

\begin{corollary}[TDPE for call-by-name] Given $p : \Gamma\vdash_b A$, we have that $\reify{\Gamma}{b}{A}{\eval{p}{\Gammareflect{\Gamma}{b}}} : \Gamma\vdash^\normal_b A$.
\end{corollary}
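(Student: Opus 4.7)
The strategy is direct composition of the three results developed in the section: build an initial environment from the reflection function applied to the variables of $\Gamma$, push $p$ through the Evaluation theorem for call-by-name, and then extract a syntactic normal form via Reification. All three ingredients have already been proved, so the corollary follows once we check that the types line up when we instantiate the abstract Kripke--CPS structure with the universal model $\Universal$.

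Concretely, I would first fix the model to be $\Universal$, so that worlds are precisely contexts $K = \List(\Formula)$. In this model, the specific world $\Gamma$ is available, and $\le$ is the prefix relation, so $\Gamma \le \Gamma$ is immediate. Next, I would verify that $\Gammareflect{\Gamma}{b}$ really has type $\forces{\Gamma}{b}{\Gamma}$: this is a short induction on the list $\Gamma$, where at each step one applies the reflect clause $\reflect{\Gamma}{b}{A}{\cdot}$ to a deBruijn variable (suitably many $\wkn{\cdot}$ applied to $\hyp$) that witnesses $\Gamma \vdash^\neutral_b A$, and pairs the result with the environment built by the induction hypothesis; the base case is $\one : \Unit = \forces{\Gamma}{b}{\nil}$. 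With this environment in hand, I apply Theorem~\ref{thm:eval:cbn} to $p$, instantiated at $w := \Gamma$ and at $b' := b$ (so the required $b \sqsubseteq b$ is reflexivity of $\sqsubseteq$), obtaining $\eval{p}{\Gammareflect{\Gamma}{b}} : \forces{\Gamma}{b}{A}$. Finally, I apply Reification to this forcing witness to get $\reify{\Gamma}{b}{A}{\eval{p}{\Gammareflect{\Gamma}{b}}} : \Gamma \vdash^\normal_b A$, which is exactly the conclusion.

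The potentially delicate step is not the composition itself but the bookkeeping inside $\Gammareflect{\Gamma}{b}$: as one traverses the list, each earlier hypothesis must be reflected not at its ``local'' context but at the full $\Gamma$, which requires invoking the monotonicity lemma for (non-strong) forcing along the prefix order, or equivalently using the weakening properties \texttt{proof\_ne\_mon} of the typing system to lift $\hyp$ to the appropriate $\wkn{\cdots\wkn{\hyp}}$ witness of $\Gamma \vdash^\neutral_b A$. Once the environment is typed correctly at $\forces{\Gamma}{b}{\Gamma}$, the rest of the argument is a transparent sequence of function applications: reify after eval after reflect, with no further induction needed at this level.
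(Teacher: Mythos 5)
Your proposal is correct and follows exactly the paper's own argument: the corollary is obtained by instantiating the Kripke--CPS structure with the universal model $\Universal$, feeding the environment $\Gammareflect{\Gamma}{b}$ (built by folding reflection over $\Gamma$ applied to deBruijn variables) into Theorem~\ref{thm:eval:cbn} at $w := \Gamma$, $b' := b$, and then applying reification. Your extra remark about the weakening bookkeeping inside $\Gammareflect{\Gamma}{b}$ is a correct and welcome refinement of a detail the paper leaves implicit (its displayed example writes each component as $\reflect{\Gamma}{b}{\cdot}{\hyp}$, deferring the lifting to the formalization's monotonicity lemmas).
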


\begin{corollary}[TDPE for call-by-value] Given $p : \nil\vdash_b A$, we have that $\reify{\nil}{b}{A}{\eval{p}{\Unit}} : \nil\vdash^\normal_b A$.
\end{corollary}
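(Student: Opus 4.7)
The plan is to specialize the call-by-value Evaluation theorem (Theorem~\ref{thm:eval:cbv}) to the universal Kripke-CPS model $\Universal$ constructed in Subsection~\ref{subsec:completeness}, with world $w := \nil$ and annotation $b' := b$ (justified by reflexivity of $\sqsubseteq$), and then compose with the reification theorem. The hypothesis of the evaluation theorem needs a strong-forcing environment of type $\sforces{\nil}{b}{\nil}$, which by the list clause of strong forcing is definitionally equal to $\Unit$; this is precisely what the expression $\eval{p}{\Unit}$ takes as its argument (more pedantically, its unique inhabitant $\one$). Evaluation then yields an element of $\forces{\nil}{b}{A}$, and feeding that into $\reify{\nil}{b}{A}{\cdot}$ produces the desired normal derivation of type $\nil\vdash^\normal_b A$.

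The only conceptual point worth pausing on is the restriction to the empty context, which distinguishes this corollary from the call-by-name one. In the call-by-name case, Theorem~\ref{thm:eval:cbn} demands only (non-strong) forcing of the context $\Gamma$, and such an environment is obtained canonically by reflecting each de Bruijn variable via $\Gammareflect{\Gamma}{b}$. In the call-by-value case, by contrast, Theorem~\ref{thm:eval:cbv} demands \emph{strong} forcing of $\Gamma$, whereas reflection only supplies $\forces{\Gamma}{b}{A}$, not $\sforces{\Gamma}{b}{A}$. There is no uniform way to build $\sforces{\Gamma}{b}{\Gamma}$ from syntactic variables in general, so we confine the statement to the closed case where the environment is trivially $\one : \Unit$ and no reflection of variables is required.

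In summary, the proof is a one-liner: apply Theorem~\ref{thm:eval:cbv} with $w=\nil$, $b'=b$, and environment $\one$ to obtain $\eval{p}{\one} : \forces{\nil}{b}{A}$, then apply reification. I do not anticipate any real obstacle, since all the substantive work is carried out in the evaluation and reification theorems and in verifying that $\Universal$ satisfies the Kripke-CPS axioms. The single point of care is the definitional matching of the empty-context strong-forcing environment with the unit inhabitant, but this is immediate from the list clauses of the strong-forcing definition.
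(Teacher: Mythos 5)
Your proposal is correct and matches the paper's own argument, which likewise obtains the corollary by composing Theorem~\ref{thm:eval:cbv} (instantiated in the universal model $\Universal$ at world $\nil$ with the trivial environment $\one:\Unit$) with the reification theorem; the paper's accompanying remark gives exactly your explanation for why the call-by-value case is restricted to closed terms. No gaps.
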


\begin{remark}
The difference in formulation between the two corollaries is due to the fact that the Evaluation theorem for call-by-name (Theorem~\ref{thm:eval:cbn}) uses ordinary forcing for the context $\Gamma$, while the corresponding Theorem~\ref{thm:eval:cbv} for call-by-value uses strong forcing. TDPE for CBN can therefore be run on open terms directly, while for CBV we have to have a closed term as input, although TDPE for CBV does normalize below lambda abstractions.
\end{remark}

The following property shows that the calculus from Table~\ref{tab:typing} can be considered a constructive logical system, despite the fact that it contains control operators which are usually connected with classical logic. (Classical logic does not have this property)
\begin{proposition}[Disjunction Property] If $p : \nil\vdash_0 A\vee B$ then from $p$ one can get $p'$ such that either $p' : \nil\vdash_0 A$ or $p' : \nil\vdash_0 B$.
\end{proposition}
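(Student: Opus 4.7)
The plan is to run the call-by-value TDPE algorithm on $p$ and then simply read off $p'$ from the shape of the resulting normal form. Concretely, from $p : \nil\vdash_0 A\vee B$ the call-by-value Corollary produces
\[
p^* \;:=\; \reify{\nil}{0}{A\vee B}{\eval{p}{\Unit}} \;:\; \nil\vdash^\normal_0 A\vee B.
\]
Since every $\normal$/$\neutral$ derivation is automatically a derivation in the underlying system from Table~\ref{tab:typing} (by forgetting the annotations), it is enough to extract an $\inl$- or $\inr$-headed proof from $p^*$.

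I would then do a case analysis on $p^*$ along the normal-form grammar
\[ r ::= \lam{r} \mid \inl{r} \mid \inr{r} \mid \shift{r} \mid e. \]
The case $\lam{r}$ is excluded by the type ($A\vee B$ is not an arrow). The case $\shift{r}$ is excluded by the annotation: the $\shift{\cdot}$ rule forces the conclusion to have $b=1$, whereas $p^*$ has $b=0$. If $p^* = \inl{r}$ with $r : \nil\vdash^\normal_0 A$, set $p' := r$; symmetrically for $\inr{r}$. That leaves only the possibility $p^* = e$ with $e : \nil\vdash^\neutral_0 A\vee B$.

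The main step, and the one real obstacle, is ruling out this last neutral case. I would prove as a subsidiary lemma that \emph{no neutral term is typeable in the empty context, for any annotation $b$ and any type $C$}. The lemma goes by induction on the grammar
\[ e ::= \app{e'}{r} \mid \casemy{e'}{r_1}{r_2} \mid \reset{e'} \mid \hyp \mid \wkn{r}. \]
The base cases $\hyp$ and $\wkn{r}$ are immediate, because their typing rules require the context to be of the form $A,\Gamma$ respectively $B,\Gamma$, hence never $\nil$. Each inductive case exposes a sub-neutral $e'$ living in the \emph{same} context $\nil$ (in $\reset{e'}$ only the annotation switches to $1$, but the context is untouched), so the induction hypothesis closes it. The subtlety here — and the reason the lemma must be stated with $b$ and $C$ universally quantified up front — is exactly the $\reset$ case, which would fail if the induction were carried out for a fixed annotation. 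This is also the point where it matters that $\reset{\cdot}$ was placed in the neutral grammar rather than the normal one: otherwise the final normal form $p^*$ could hide a $\reset$-block at its head and the tidy five-way case split above would break.
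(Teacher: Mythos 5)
Your proposal is correct and follows essentially the same route as the paper: normalize via TDPE, then do a case analysis on the normal form, excluding $\shift{\cdot}$ by the annotation $0$ and the neutral alternatives because no neutral term is typeable in the empty context. The paper states the latter fact in one line ("it does not have a free variable"), whereas you carry out the induction explicitly and correctly flag the need to generalize over $b$ for the $\reset{\cdot}$ case — a worthwhile elaboration, but not a different proof.
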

\begin{proof}
We can use TDPE to transform $p$ to a term in normal form $r : \nil\vdash^\normal_0 A\vee B$. Now, from the syntax of normal and neutral forms, one can see that the only possibilities for $r$ are that it is either a $\inl{r'}$ or a $\inr{r'}$ -- $r$ can not be any of the neutral forms because it does not have a free variable (the context is $\nil$) -- and $r$ cannot be a $\shift{\cdot}$ because of the annotation 0 on the turnstile.
\end{proof}

\section{Algorithm}\label{sec:algorithm}

In this section we show the algorithmic core of the TDPE procedure. While the exact program in a dependently typed language can be seen with all its gory details in the Coq formalization, our intention here is to give a human readable account of the procedure that we extracted by hand from the Coq formalization. This extraction consists in deleting the dependently typed information which is mostly connected to handling worlds (members of the preorder $K$) and the associated monotonicity proofs.

We will use two levels of lambda calculus: on one level we will have the ``dynamic'' lambda terms from Table~\ref{tab:typing}, and on the other ``static'' level we will use ordinary mathematical function notation: ``$\mapsto$'' for abstraction, ``$\cdot$'' for application, $\iota_1$ for injection-left, $\iota_2$ for injection-right, and the usual big-open-curly-bracket for definition by cases. Small Greek letters $\alpha,\beta,\gamma,\phi,\kappa$ are used for static variables; there are no explicit dynamic variables since we use deBruijn indices. The equality symbol ``:='' denotes definitional equality.

The monadic glue functions are defined on Figure~\ref{fig:glue}. Parameters corresponding to dependent types for world-handling have been left out (worlds are marked with bars ``$-$'').
\begin{figure*}
  \centering
  \begin{align*}
    \ret{\cdot} &: \sforces{-}{b}{A} \to \forces{-}{b}{A}\\
    \ret{\alpha} &:= \kappa\mapsto\kappa\cdot\alpha \\
    ~ & ~\\
    \bind{\cdot}{\cdot} &: (\sforces{-}{b}{A} \to \forces{-}{b}{B}) \to \forces{-}{b}{A} \to \forces{-}{b}{B}\\  
    \bind{\phi}{\alpha} &:= \kappa\mapsto\alpha\cdot(\alpha'\mapsto \phi\cdot\alpha'\cdot\kappa)\\  
    ~ & ~\\
    \run{\cdot} &: \forces{-}{b}{\bot} \to \sforces{-}{b}{\bot}\\
    \run{\alpha} &:= \alpha\cdot(\chi\mapsto\chi)
  \end{align*}
  \caption{Monadic glue functions}
  \label{fig:glue}
\end{figure*}

The evaluation algorithms corresponding to theorems~\ref{thm:eval:cbv} and~\ref{thm:eval:cbn} are given on figures~\ref{fig:evalcbn} and \ref{fig:evalcbv}.

\begin{figure*}
  \centering
  \begin{align*}
    \eval{p : \Gamma\vdash_b A}{\forces{w}{b}{\Gamma}} &: \forces{w}{b}{A}\\
    \eval{\hyp}{\rho} &:= \Fst(\rho)\\
    \eval{\wkn{p}}{\rho} &:= \eval{p}{\Snd(\rho)}\\
    \eval{\lam{p}}{\rho} &:= \ret{\alpha\mapsto\eval{p}{\alpha,\rho}}\\
    \eval{\app{p}{q}}{\rho} &:= \bind{\phi\mapsto\phi\cdot\eval{q}{\rho}}{\eval{p}{\rho}}\\
    \eval{\inl{p}}{\rho} &:= \ret{\iota_1\eval{p}{\rho}}\\
    \eval{\inr{p}}{\rho} &:= \ret{\iota_2\eval{p}{\rho}}\\
    \eval{\casemy{p}{q}{r}}{\rho} &:= \bind{\gamma\mapsto \left\{ 
      \begin{array}{ll}
        \eval{q}{\alpha,\rho} & ,\text{ if } \gamma=\iota_1\alpha\\
        \eval{r}{\beta,\rho} & ,\text{ if } \gamma=\iota_2\beta
      \end{array}        
      \right.}{\eval{p}{\rho}}\\
    \eval{\shift{p}}{\rho} &:= \kappa\mapsto\run{\eval{p}{\ret{\alpha\mapsto\ret{\alpha\cdot\kappa}},\rho}}\\
    \eval{\reset{p}}{\rho}^{b=1} &:=\ret{\run{\eval{p}{\rho}}}\\
    \eval{\reset{p}}{\rho}^{b=0} &:=\ret{\text{meta-}\reset{\run{\eval{p}{\rho}}}}
  \end{align*}
  \caption{Evaluation for call-by-name}
  \label{fig:evalcbn}
\end{figure*}

\begin{figure*}
  \centering
  \begin{align*}
    \eval{p : \Gamma\vdash_b A}{\sforces{w}{b}{\Gamma}} &: \forces{w}{b}{A}\\
    \eval{\hyp}{\rho} &:= \ret{\Fst(\rho)}\\
    \eval{\wkn{p}}{\rho} &:= \eval{p}{\Snd(\rho)}\\
    \eval{\lam{p}}{\rho} &:= \ret{\alpha\mapsto\eval{p}{\alpha,\rho}}\\
    \eval{\app{p}{q}}{\rho} &:= \bind{\phi\mapsto\bind{\phi}{\eval{q}{\rho}}}{\eval{p}{\rho}}\\
    \eval{\inl{p}}{\rho} &:= \bind{\alpha\mapsto\ret{\iota_1\alpha}}{\eval{p}{\rho}}\\
    \eval{\inr{p}}{\rho} &:= \bind{\alpha\mapsto\ret{\iota_2\alpha}}{\eval{p}{\rho}}\\
    \eval{\casemy{p}{q}{r}}{\rho} &:= \bind{\gamma\mapsto \left\{ 
      \begin{array}{ll}
        \eval{q}{\alpha,\rho} & ,\text{ if } \gamma=\iota_1\alpha\\
        \eval{r}{\beta,\rho} & ,\text{ if } \gamma=\iota_2\beta
      \end{array}        
      \right.}{\eval{p}{\rho}}\\
    \eval{\shift{p}}{\rho} &:= \kappa\mapsto\run{\eval{p}{{\alpha\mapsto\ret{\kappa\cdot\alpha}},\rho}}\\
    \eval{\reset{p}}{\rho}^{b=1} &:=\ret{\run{\eval{p}{\rho}}}\\
    \eval{\reset{p}}{\rho}^{b=0} &:=\ret{\text{meta-}\reset{\run{\eval{p}{\rho}}}}
  \end{align*}  
  \caption{Evaluation for call-by-value}
  \label{fig:evalcbv}
\end{figure*}

The reification algorithms are defined by mutual recursion with reflection algorithms on figures~\ref{fig:reifycbn} and~\ref{fig:reifycbv}. For facilitating comparison, the places where call-by-value and call-by-name versions differ are marked with boxes.

\begin{figure*}
  \centering
  \begin{align*}
    \reify{\Gamma}{b}{A}{\cdot} &: \forces{\Gamma}{b}{A} \to \Gamma\vdash^\normal_b A\\
    \reify{\Gamma}{b}{\bot}{\alpha} &:= \run{\alpha}\\
    \reify{\Gamma}{0}{A_0}{\alpha} &:= \run{\alpha}\quad\quad\quad\qquad\qquad\qquad\qquad\qquad\qquad\qquad\qquad\qquad\qquad \text{for atomic }A_0\neq \bot\\
    \reify{\Gamma}{1}{A_0}{\alpha} &:= \shift{\alpha\cdot(\chi\mapsto\app{\hyp}{\chi})}\qquad\qquad\qquad\qquad\qquad\qquad\qquad \text{for atomic }A_0\neq \bot\\
    \reify{\Gamma}{b}{A\to B}{\alpha} &:= \lam{\reify{\Gamma}{b}{B}{\kappa\mapsto \reflect{{A,\Gamma}}{b}{A}{\hyp}\cdot (\alpha'\mapsto\alpha\cdot(\phi\mapsto\phi\cdot(\ret{\alpha'})\cdot\kappa))}}\\
    \reify{\Gamma}{0}{A\vee B}{\alpha} &:= \alpha\cdot\left(\gamma\mapsto\left\{
      \begin{array}{ll}
        \inl{\reify{{\Gamma}}{0}{A}{\beta}} & ,\text{ if } \gamma=\iota_1\beta\\
        \inr{\reify{{\Gamma}}{0}{B}{\beta}} & ,\text{ if } \gamma=\iota_2\beta
      \end{array}
    \right.\right)\\
    \reify{\Gamma}{1}{A\vee B}{\alpha} &:=\shift{\alpha\cdot\left(\gamma\mapsto\left\{
      \begin{array}{ll}
        \app{\hyp}{\inl{\reify{{\Gamma}}{1}{A}{\beta}}} & ,\text{ if } \gamma=\iota_1\beta\\
        \app{\hyp}{\inr{\reify{{\Gamma}}{1}{B}{\beta}}} & ,\text{ if } \gamma=\iota_2\beta
      \end{array}
    \right.\right)}\\
    ~ & ~\\
    \reflect{\Gamma}{b}{A}{\cdot} &: \Gamma\vdash^\neutral_b A\to \forces{\Gamma}{b}{A}\\
    \reflect{\Gamma}{b}{A_0}{e} &:= \ret{e}\quad\quad\qquad\qquad\qquad\qquad\qquad\qquad\qquad\qquad\qquad\qquad \text{for atomic }A_0\\
    \reflect{\Gamma}{b}{A\to B}{e} &:= \ret{\alpha\mapsto\reflect{\Gamma}{b}{B}{\app{e}{\reify{\Gamma}{b}{A}{\alpha}}}}\\
    \reflect{\Gamma}{b}{A\vee B}{e} &:= \kappa\mapsto\casemy{e}{\reflect{{A,\Gamma}}{b}{A}{\hyp}\cdot(\alpha\mapsto\kappa\cdot\iota_1\ret{\alpha})}{\reflect{{B,\Gamma}}{b}{B}{\hyp}\cdot(\alpha\mapsto\kappa\cdot\iota_2\ret{\alpha})}
  \end{align*}
  \caption{Reification and reflection for call-by-name}
  \label{fig:reifycbn}
\end{figure*}

\begin{figure*}
  \centering
  \begin{align*}
    \reify{\Gamma}{b}{A}{\cdot} &: \forces{\Gamma}{b}{A} \to \Gamma\vdash^\normal_b A\\
    \reify{\Gamma}{b}{\bot}{\alpha} &:= \run{\alpha} \\
    \reify{\Gamma}{0}{A_0}{\alpha} &:= \run{\alpha}\quad\quad\quad\qquad\qquad\qquad\qquad\qquad\qquad\qquad\qquad\qquad\qquad \text{for atomic }A_0\neq \bot \\
    \reify{\Gamma}{1}{A_0}{\alpha} &:= \shift{\alpha\cdot(\chi\mapsto\app{\hyp}{\chi})}\qquad\qquad\qquad\qquad\qquad\qquad\qquad \text{for atomic }A_0\neq \bot\\
    \reify{\Gamma}{b}{A\to B}{\alpha} &:= \lam{\reify{\Gamma}{b}{B}{\kappa\mapsto \reflect{{A,\Gamma}}{b}{A}{\hyp}\cdot (\alpha'\mapsto\alpha\cdot(\phi\mapsto\phi\cdot(\boxed{\alpha'})\cdot\kappa))}}\\
    \reify{\Gamma}{0}{A\vee B}{\alpha} &:= \alpha\cdot\left(\gamma\mapsto\left\{
      \begin{array}{ll}
        \inl{\reify{{\Gamma}}{0}{A}{\boxed{\ret{\beta}}}} & ,\text{ if } \gamma=\iota_1\beta\\
        \inr{\reify{{\Gamma}}{0}{B}{\boxed{\ret{\beta}}}} & ,\text{ if } \gamma=\iota_2\beta
      \end{array}
    \right.\right)\\
    \reify{\Gamma}{1}{A\vee B}{\alpha} &:=\shift{\alpha\cdot\left(\gamma\mapsto\left\{
      \begin{array}{ll}
        \app{\hyp}{\inl{\reify{{\Gamma}}{1}{A}{\boxed{\ret{\beta}}}}} & ,\text{ if } \gamma=\iota_1\beta\\
        \app{\hyp}{\inr{\reify{{\Gamma}}{1}{B}{\boxed{\ret{\beta}}}}} & ,\text{ if } \gamma=\iota_2\beta
      \end{array}
    \right.\right)}\\
    ~ & ~\\
    \reflect{\Gamma}{b}{A}{\cdot} &: \Gamma\vdash^\neutral_b A\to \forces{\Gamma}{b}{A}\\
    \reflect{\Gamma}{b}{A_0}{e} &:= \ret{e}\quad\quad\qquad\qquad\qquad\qquad\qquad\qquad\qquad\qquad\qquad\qquad \text{for atomic }A_0\\
    \reflect{\Gamma}{b}{A\to B}{e} &:= \ret{\alpha\mapsto\reflect{\Gamma}{b}{B}{\app{e}{\reify{\Gamma}{b}{A}{\boxed{\ret{\alpha}}}}}}\\
    \reflect{\Gamma}{b}{A\vee B}{e} &:= \kappa\mapsto\casemy{e}{\reflect{{A,\Gamma}}{b}{A}{\hyp}\cdot(\alpha\mapsto\kappa\cdot\iota_1\boxed{\alpha})}{\reflect{{B,\Gamma}}{b}{B}{\hyp}\cdot(\alpha\mapsto\kappa\cdot\iota_2\boxed{\alpha})}
  \end{align*}  
  \caption{Reification and reflection for call-by-value}
  \label{fig:reifycbv}
\end{figure*}

\subsection{Known Equational Theories}

Before considering computational tests, we recall the available equational theories for shift and reset. 

The equational theory for call-by-value shift and reset, for the full hierarchy, has been proven sound and complete with respect to the extended CPS translation \cite{DanvyF1990} by Kameyama \cite{Kameyama2007}. Considering the first level of the hierarchy which is of interest here, the equations are expressed using the classes of \emph{values} ($V$) and \emph{pure evaluation contexts} ($F$),
\begin{align*}
  V &::= x ~|~ \lambda x.p &
  F &::= [] ~|~ F p ~|~ V F,
\end{align*}
as follows:
\begin{align}
(\lambda x.p) V &= p \{V/x\}\\
\lambda x. V x &= V & \text{ when }x\notin \FV(V)\\
(\lambda x. F[x])p &= F[p] & \text{ when }x\notin \FV(F)\\
\label{reset-value}\ureset{V} &= V\\
\ureset{(\lambda x.p)\ureset{q}} &= (\lambda x. \ureset{p})\ureset{q}\\
\ushift{k}{\ureset{p}} &= \ushift{k}{p}\\
\ushift{k}{k\ureset{p}} &= \ureset{p}&\text{ when }k\notin \FV(p)\\
\ushift{k}{k p} &= p&\text{ when }k\notin \FV(p)\\
\ureset{F[\ushift{k}{p}]} &= \ureset{p\{(\lambda x.\ureset{F[x]}) ~/~ k\}}&\text{ when }x\notin \FV(F)\cup\{k\}
\end{align}

The equational theory for call-by-name shift and reset, for the first level of the hierarchy, has been studied by Kameyama and Tanaka \cite{KameyamaTanaka2010}. For the purpose of proving soundness and completeness with respect to Biernacka and Biernacki's \cite{BiernackaB2009} call-by-name CPS semantics for shift and reset, Kameyama and Tanaka distinguish between two kinds of term applications, the usual one, and the one to continuation variables ($\hookleftarrow$); and two kinds of substitutions, for normal variables ($\{\cdot/x\}$), and for continuation variables ($\{k\Rightarrow\cdot\}$). The classes of values and pure evaluation contexts are restrictions of the call-by-name ones, given by:\footnote{Kameyama and Tanaka also consider constants $c$ among the call-by-name values, however no variables are allowed. Since we do not have constants in our minimal object-language of study, we did not include them as an option of $U$.}
\begin{align*}
  U &::= \lambda x.p &
  E &::= [] ~|~ E p
\end{align*}
The equational theory is as follows,
\begin{align}
(\lambda x.p) q &= p \{q/x\}\\
\ureset{U} &= U\\
k'\hookleftarrow E[\ushift{k}{p}] &= \ureset{p \{k\Rightarrow (k'\hookleftarrow E)\}}\\
\ushift{k}{\ureset{p}} &= \ushift{k}{p}\\
\ushift{k}{k\hookleftarrow p} &= p&\text{ when }k\notin \FV(p)\\
\ureset{E[\ushift{k}{p}]} &= \ureset{p\{k\Rightarrow E\}},
\end{align}
where the substitution $q \{k\Rightarrow\cdot\}$ is defined by recursive descent on the term $q$ and affects only the subterms of the form $k\hookleftarrow p$ by:
\begin{align*}
(k'\hookleftarrow p)\{k\Rightarrow E\} &= \ureset{E[p\{k\Rightarrow E\}]} &\text{when } k' = k\\
(k'\hookleftarrow p)\{k\Rightarrow E\} &= k'\hookleftarrow (p \{k\Rightarrow E\})&\text{when } k' \neq k
\end{align*}

We have used conventional syntax, writing $\reset{p}$ as $\ureset{p}$, and $\shift{p}$ as $\ushift{k}{p}$, and will continue to do so in the next subsection.

\subsection{Example Runs of the Algorithm}\label{examples}

Let us now consider some test-runs of our TDPE procedure. Each example consists of an input term, marked with a number to refer to, and two outputs: using TDPE for call-by-value (CBV) and for call-by-name (CBN).

We begin with simple examples where the continuation variable of shift is not used (exceptions effect).
\begin{align}
\lambda x.& \ureset{(\lambda y.y)(\ushift{k}{x})}\label{ex1}\\
\lambda x.& \ureset{x} \tag{CBV}\\
\lambda x.& \ureset{\ureset{x}} \tag{CBN}
\end{align}
\begin{align}
\lambda x.& \ureset{\ureset{\ureset{(\lambda y.y)(\ushift{k}{x})}}}\label{ex2}\\
\lambda x.& \ureset{x} \tag{CBV}\\
\lambda x.& \ureset{\ureset{x}} \tag{CBN}
\end{align}
The CBN normal forms are not perfect as the resets are systematically duplicated at top level. This duplication is not related to the number of reset as input, as can be seen from Example (\ref{ex2}), but to a ``bug'' in the Coq formalization. Namely, the lemma \texttt{Kont\_sforces\_mon2'}, proving monotonicity of non-strong forcing with respect to the Boolean order, uses a reset in the proof, and, since this lemma is not used in the CBV case, the problem does not appear there.\footnote{The solution might be to make the non-strong forcing monad monotone also for the $\sqsubseteq$ relation and not only for $\leq$ on worlds, and is the subject of future work.}

The CBV equational theory can derive the TDPE output for examples (\ref{ex1}) and (\ref{ex2}). The CBN equational theory derives $\lambda x.\ureset{x}$ but not $\lambda x.\ureset{\ureset{x}}$. However, our TDPE for CBN identifies the two, because it also normalizes $\lambda x.\ureset{x}$ to $\lambda x.\ureset{\ureset{x}}$.

The next example does not use a control operator, but has a delimiter.
\begin{align}
\lambda x y.& \ureset{\ureset{x y}}\label{ex3}\\
\lambda x y.& {\ureset{x y}} \tag{CBV}\\
\lambda x y.& \ureset{\ureset{x \ureset{y}}} \tag{CBN}
\end{align}
The CBV and CBN equational theories do not transform Example (\ref{ex3}) further, because the subterm $x y$ is not a value. TDPE for CBV removes one delimiter, as if $x y$ were a value, and TDPE for CBN delimits the inside variable y, as if it were taking into account that variables are not values according to the CBN equational theory.

Let us consider an example that uses the continuation inside a shift.
\begin{align}
\lambda x y.& \ureset{x (\ushift{k}{k(k y)})}\label{ex4}\\
\lambda x y.& {\ureset{x (x y)}} \tag{CBV}\\
\lambda x y.& \ureset{\ureset{x \ureset{y}}} \tag{CBN}
\end{align}
Starting from Example~(\ref{ex4}), the CBV equational theory can obtain the term
$
\lambda x y. \ureset{(\lambda a.\ureset{x a})\big((\lambda a.\ureset{x a}) y\big)}$,
and then also the term
$
\lambda x y. \ureset{(\lambda a.\ureset{x a})\ureset{x y}},
$
however, no further rewriting is possible using that theory, because neither is $\ureset{x []}$ a pure evaluation context nor is $\ureset{x y}$ a value. As for the CBN equational theory, it can not rewrite the starting term (\ref{ex4}), because there are nested applications to the continuation variable $k\hookleftarrow (k \hookleftarrow y)$ and, unlike in the CBV case, $x []$ is not a pure evaluation context in CBN. Note that: 1) there is no $x$ missing in the output of CBN TDPE; 2) the term
$
\lambda x y. \ureset{(\lambda a.\ureset{x a})\ureset{x y}}
$
obtained by CBV equations is normalized by the CBV TDPE procedure to the same thing as term (\ref{ex4}) (see \texttt{nbe\_tests.v} from the implementation), meaning that TDPE knows how to further reduce those ``blocked'' terms.

The following example is very similar to the previous one, hence we will not comment on it much, but its purpose is to show that CBN TDPE can duplicate the variable $x$ if needed.
\begin{align}
\lambda x y.& \ureset{x\ureset{x (\ushift{k}{k(k y)})}}\label{ex5}\\
\lambda x y.& {\ureset{x (x (x y))}} \tag{CBV}\\
\lambda x y.& \ureset{\ureset{x\ureset{x \ureset{y}}}} \tag{CBN}
\end{align}

The next example contains an evaluation context and a form of shift that can be used with the equational theory for CBN.
\begin{align}
\lambda x y.& \ureset{(\ushift{k}{k y}) x}\label{ex6}\\
\lambda x y.& \ureset{y x} \tag{CBV}\\
\lambda x y.& \ureset{\ureset{y\ureset{x}}} \tag{CBN}
\end{align}
The CBV equational theory rewrites (\ref{ex6}) to $\lambda xy.\ureset{\ureset{y x}}$, and the CBN one rewrites (\ref{ex6}) to $\lambda xy.\ureset{y x}$. If we apply TDPE on these results of rewriting, we get the same output as the TDPE for (\ref{ex6}).

In the following example, two shifts interact inside the same reset.
\begin{align}
\lambda x y z.& \ureset{\big(\ushift{k}{y(k z)}\big)\big(\ushift{k'}{z(k' x)}\big)}\label{ex7}\\
\lambda x y z.& \ureset{y(z(zx))} \tag{CBV}\\
\lambda x y z.& \ureset{\ureset{y\ureset{z\ureset{z\ureset{x}}}}} \tag{CBN}
\end{align}
The CBV equational theory can transform (\ref{ex7}) to $\lambda x y z.\ureset{y\ureset{z\ureset{z x}}}$ which can in turn be transformed by CBV TDPE to the same result as for (\ref{ex7}). The CBN equational theory can rewrite the left occurrence of shift and obtain $\lambda x y z. \ureset{y\ureset{z\big(\ushift{k'}{z(k' x)}\big)}}$, but no further rewriting is possible because $z[]$ is not a pure evaluation context in CBN; nevertheless, $\lambda x y z. \ureset{y\ureset{z\big(\ushift{k'}{z(k' x)}\big)}}$ can further be transformed by CBN TDPE to the same output as for (\ref{ex7}).

We consider an example that involves sum types, but briefly, since we do not have a ready made equational theory to compare the output to.
\begin{align}
\lambda x y.& \ureset{\text{case}~x~\text{of}~(\lambda z.\ushift{k}{k z} ~\|~ \lambda z.z)}\\
\lambda x y.& \text{case}~x~\text{of}~(\lambda z.\ureset{z} ~\|~ \lambda z.\ureset{z}) \tag{CBV}\\
\lambda x y.& \text{case}~x~\text{of}~(\lambda z.\ureset{\ureset{\ureset{z}}} ~\|~ \lambda z.\ureset{\ureset{\ureset{z}}}) \tag{CBN}
\end{align}
We see that not only the reset is pushed from the front of the case-expression into its branches. 

\section{Discussion and Related Work}\label{sec:conclusion}

The CPS translation that we use for CBV TDPE is exactly\footnote{There are additional typing annotations concerning worlds attached to the continuations at the type theoretic level. Another subtle point is that, when evaluating reset (figures \ref{fig:evalcbn} and \ref{fig:evalcbv}), the type theoretic model predicates when to insert a syntactic reset between the return and the run. One may insert a reset in the other cases as well, if one wants to obtain normal forms with more resets, but we prefer to not do it since it is not mandated by the model.} the standard (non-extended) CBV CPS translation of Danvy and Filinski \cite{DanvyF1990}, known also as 1-CPS. Terms in 1-CPS arising from shift are not evaluation-order independent when executed in regular functional programming languages, and that is why, to fix the semantics of shift and reset regardless of the target language of CPS, an additional CBV CPS translation of the CPS result is usually performed, and this composition of two CPS translations is known as the extended CPS, or 2-CPS. It is with respect to this 2-CPS that Kameyama \cite{Kameyama2007} proved the equational theory for CBV to be sound and complete.

The CPS translation used for CBN TDPE is \emph{not} the available 1-CPS of Biernacka and Biernacki \cite{BiernackaB2009}. The difference is in the shift rule (see Figure~\ref{fig:evalcbn}), as Biernacka and Biernacki's
$
    \eval{\shift{p}}{\rho} := \kappa\mapsto\run{\eval{p}{\kappa,\rho}}
$
would not type check in our type theoretic model. The standard 2-CPS translation, that Kameyama and Tanaka \cite{KameyamaTanaka2010} proved their equational theory for CBN sound and complete for, is obtained by performing a \emph{CBV} translation of the 1-CPS CBN translation.

We profit from our implementation language having strong reduction\footnote{Constructive type theory is strongly normalizing and there is a simple and efficient implementation of a virtual machine for strong reduction \cite{GregoireL2002}.} in that we do not have to apply two passes of CPS. That is, 1-CPS is sufficient because our evaluation (CPS translation) of a term at the meta-level is a typed and closed term which reduces to the same normal form regardless of the reduction strategy. The typed CPS-s used by Kameyama and Tanaka need recursive types, while we do not. On the other hand, we do not know if it possible at all to account for constants defined by general recursion in our model \footnote{We have however, in separate work, extended the model to higher type primitive recursion (Godel's System T) plus shift and reset on numeric types.}. The question, therefore, of whether our TDPE could be useful in practice is open. We certainly find it useful when ``practice'' concerns lambda calculi for Logic and proof assistants.

We saw in Subsection~\ref{examples} that some terms that cannot be further rewritten by the equational theories, can be further normalized by the TDPE. On the other hand, the equational theories have been proven to be sound and complete with respect to the CPS translation, that is, an equation holds between two terms if and only if the two terms have $\beta$-$\eta$-equal CPS translations. That means that the extra ``rewriting'' done by the TDPE somehow extends the equality of CPS translations -- indeed, the outputs of TDPE for the examples of Subsection~\ref{examples} do \emph{not} have CPS translations $\beta$-$\eta$-equal with the ones of the inputs. Nevertheless, at least for all the examples that we have tested, the TDPE identifies the original terms with the ``intermediary'' results arrived to by the equational theories.

As for the typing system we use, we note that it is Filinski's system \cite{FilinskiThesis}, which is sufficient for representing monadic effects by delimited control. A difference with that system is that there is an annotation $b$ on the turnstile ($\vdash_b$) whose purpose is to not allow shifts appearing outside the delimited. This could have also been guaranteed by an external syntactic criteria on whole terms, but the calculus is easier to model if all information is already present in the typing system. The more general typing system for shift and reset, with answer type modification, can type check more programs, but with the price of a function being able to modify its own answer type that we are not ready to pay. In particular, the modified meening of implication would not immediately correspond to something well known on the side of Logic.

Our work was developed independently of the results of the previous work on TDPE for CBV shift and reset of Tsushima and Asai \cite{TsushimaAsai2009}, which seems to derive from their preceding works on traditional offline and online partial evaluation for shift and reset \cite{Asai2002,Asai2004}. The difference between theirs and our results seems to be that: 1) they treat a more general typing system (function types with answer type modification); 2) we aim to produce normal forms that eliminate as many shifts and resets as possible: for example, during reification for function types, Tsushima and Asai's TDPE constructs a $\shift{\cdot}$ immediately after the first $\lam{\cdot}$, whereas we postpone the construction of $\shift{\cdot}$ to some cases of reification that will subsequently called.

\subsection*{Acknowledgements}
I thank the anonymous referees for pointing out problematic parts that led to improvement of the paper. 

\bibliographystyle{eptcs}
\bibliography{nbe-shift-1}

\end{document}